\newtheorem{theorem}{Theorem}[section]
\newtheorem{problem}[theorem]{Problem}
\newtheorem{corollary}[theorem]{Corollary}
\newtheorem{proposition}[theorem]{Proposition}
\author{Diane Castonguay
  \and Erika M. M. Coelho
  \and Hebert Coelho
  \and Julliano R. Nascimento}
\title[A note on the convexity number of complementary prisms]{A note on the convexity number of complementary prisms\thanks{This study was financed in part by the Coordenação de Aperfeiçoamento de Pessoal de Nível Superior - Brasil (CAPES) - Finance Code 001}}
\affiliation{
  Instituto de Inform\'{a}tica, Universidade Federal de Goi\'{a}s, GO, Brazil}
\keywords{geodetic convexity, convex set, convexity number, complementary prism}
\begin{document}

\publicationdetails{21}{2019}{4}{4}{4853}

\maketitle
\begin{abstract}
  In the geodetic convexity, a set of vertices $S$ of a graph $G$ is \textit{convex} if all vertices belonging to any shortest path between two vertices of $S$ lie in $S$. The cardinality $con(G)$ of a maximum proper convex set $S$ of $G$ is the \textit{convexity number} of $G$. The \textit{complementary prism} $G\overline{G}$ of a graph $G$ arises from the disjoint union of the graphs $G$ and $\overline{G}$ by adding the edges of a perfect matching between the corresponding vertices of $G$ and $\overline{G}$. In this work, we prove that the decision problem related to the convexity number is $\NP$-complete even restricted to complementary prisms, we determine $con(G\overline{G})$ when $G$ is disconnected or $G$ is a cograph, and we present a lower bound when $diam(G) \neq 3$. 
\end{abstract}

\section{Introduction} 
\label{sec:introduction}

In this paper we consider finite, simple, and undirected graphs, and we use standard terminology. For a finite and simple graph $G$ with vertex set $V(G)$, a \textit{graph convexity} on $V(G)$ is a collection $\mathcal{C}$ of subsets of $V(G)$ such that $\emptyset, V(G) \in \mathcal{C}$ and $\mathcal{C}$ is closed under intersections.
The sets in $\mathcal{C}$ are called \textit{convex sets} and the \textit{convex hull} $H(S)$ in $\mathcal{C}$ of a set $S$ of vertices of $G$ is the smallest set in $\mathcal{C}$ containing $S$.

Convex sets in graphs emerged as an analogy to convex sets in the Euclidean plane. Such concepts have attracted attention in the last decades due to its versatility for modeling some disseminating processes between discrete entities.
Graph convexities can model, for instance, contexts of distributed computing \citep{flocchini2004dynamic,peleg2002local}. 


We can consider a computer network, modeled as a graph $G$ (the computers and their connections are represented by $V(G)$ and $E(G)$, respectively), where a fault on some computer data propagates to other computers, according to some rule of propagation.
A rule of propagation may be, for instance, if a fault occurs in two computers $a$ and $b$, such a fault propagates to all computers that lie in the shortest paths between $a$ and $b$. A problem raised in this context is findinding a maximum subset of computers $S \subseteq V(G)$ in which faults can occur in order to ensure that the entire network does not fail.
The cardinality of $S$ corresponds to the parameter known as convexity number of the graph $G$ and the propagation rule coincides with the geodetic convexity.

We may cite other contexts in which graph convexities can be applied, \textit{e.g.} spread of disease and contamination \citep{balogh1998random,bollobas2006art,dreyer2009irreversible}, marketing strategies \citep{domingos2001mining,kempe2003maximizing,kempe2005influential}, and spread of opinion \citep{brunetti2012minimum,dreyer2009irreversible}.

 We consider the \textit{geodetic convexity} $\mathcal{C}$ on a graph $G$, which is defined by means of shortest paths in $G$. We say that a set of vertices $S$ of a graph $G$ is \textit{convex} if all vertices belonging to any shortest path between two vertices of $S$ lie in $S$. The cardinality $con(G)$ of a maximum proper convex set $S$ of $G$ is the \textit{convexity number} of $G$. 

One of the first works to introduce the convexity number was published by \cite{chartrand2002convexity}. They determine the convexity number for complete graphs, paths, cycles, trees, and present bounds for general graphs.
In the same year, \cite{canoy2002convex} show results on the convexity number for graph operations like join, composition, and Cartesian product. Later on, \cite{kim2004lower} studies the parameter for $k$-regular graphs. Considering complexity aspects,  \cite{gimbel2003some} shows that determining the convexity number is $\NP$-hard for general graphs, whereas \cite{dourado2012convexity} refine  Gimbel's  result showing the $\NP$-hardness of the problem even restricted to bipartite graphs.

Motivated by the work of \cite{canoy2002convex} on the convexity number for graph operations, we study that parameter for a graph product called complementary prism. Such graph product was introduced by \cite{haynes2007complementary} as a variation of the well-known \textit{prism} of a graph \citep{hammack2011handbook}. Let $G$ be a graph and $\overline{G}$ its complement. For every vertex $v \in V(G)$ we denote $\overline{v} \in V(\overline{G})$ as its \textit{corresponding vertex}. The \textit{complementary prism} $G\overline{G}$ of a graph $G$ arises from the disjoint union of the graph $G$ and $\overline{G}$ by adding the edges of a perfect matching between the corresponding vertices of $G$ and $\overline{G}$. A classic example of a complementary prism is the Petersen graph $C_5\overline{C}_5$.

In this paper we determine the convexity number for complementary prisms $G\overline{G}$ when $G$ is disconnected or $G$ is a cograph, and we present a lower bound of that parameter for complementary prisms of graphs with restricted diameter. Furthermore, given a complementary prism $H\overline{H}$, and an integer $k$, we prove that it is $\NP$-complete to decide whether $con(H\overline{H}) \geq k$.  

This paper is divided in more three sections. In Section~\ref{sec:preliminaries} we define the fundamental concepts and terminology. In Section~\ref{sec:results} we present our contributions. We close with the conclusions in Section~\ref{sec:conclusions}.

\section{Preliminaries}
\label{sec:preliminaries}
 

Let $G$ be a graph. Given a vertex $v \in V(G)$, its \textit{open neighborhood} is denoted by $N_G(v)$, and its \textit{closed neighborhood}, denoted by $N_G[v]$, is the set $N_G[v] = N_G(v) \cup \{v\}$.  For a set $U \subseteq V(G)$, let $N_G(U) = \bigcup_{v \in U} N_G(v) \setminus U$, and $N_G[U] = N_G(U) \cup U$. 

A \textit{clique} (resp. \textit{independent set}) is a set of pairwise adjacent (resp. nonadjacent) vertices.
A vertex of a graph $G$ is \textit{simplicial} in $G$ if its neighborhood induces a clique.

The \textit{distance} $d_G(u,v)$ of two vertices $u$ and $v$ in $G$ is the minimum number of edges of a path in $G$ between $u$ and $v$. Let $A, B \subseteq V(G)$. The \textit{distance} between $A$ and $B$ in $G$ is defined by $d_G(A,B) = \min\{d_G(u,v): x \in A, \, y \in B\}$.
A graph $G$ is called \textit{connected} if any two of its vertices are linked by a path in $G$. Otherwise, $G$ is called \textit{disconnected}. A maximal connected subgraph of $G$ is called a \textit{connected component} or \textit{component} of~$G$. A component $G_i$ of a graph $G$ is \textit{trivial} if  $|V(G_i)| = 1$, and \textit{non-trivial} otherwise.

Let $G$ be a graph. For a set $X \subseteq V(G)$, we let $\overline{X}$ be the \textit{corresponding set} of vertices in $V(\overline{G})$.
We denote the set of positive integers $\{1, \dots, k\}$ by $[k]$.

A convex set $S$ of a graph $G$ can be defined by a closed interval operation. 
The \textit{closed interval} $I[u,v]$ of a pair $u,v \in V(G)$ consists of all vertices lying in any shortest $(u,v)$-path in $G$. For a set $S \subseteq V(G)$, the \textit{closed interval} $I[S]$ is the union of all sets $I[u,v]$ for $u,v \in S$ and if $|S| < 2$, then $I[S] = S$. We say that $S$ is a \textit{convex set}, if $I[S] = S$. To avoid ambiguity, sometimes a subscript can be added to the notation (\textit{e.g.} $I_G[S]$, and $H_G(S)$) to indicate which graph $G$ is being considered.

\section{Results} 
\label{sec:results}

\cite{chartrand2002convexity} provide two useful results. They proved that $con(K_n) = n-1$, and for a noncomplete graph the result follows in Theorem~\ref{theo:chartrand2002}.

\begin{theorem}[\cite{chartrand2002convexity}]\label{theo:chartrand2002}
Let $G$ be a noncomplete connected graph of order $n$. Then $con(G) = n-1$ if and only if $G$ contains a simplicial vertex.
\end{theorem}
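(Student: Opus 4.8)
The plan is to reduce the statement to a clean characterization of which single-vertex deletions leave a convex set, and then to identify that characterization with simpliciality.

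First I would record the easy reduction. Since $G$ is connected and noncomplete, the set $V(G)$ is convex but not proper, so every proper convex set has at most $n-1$ vertices and thus $con(G) \le n-1$ always holds. Moreover, a convex set of size exactly $n-1$ must have the form $V(G)\setminus\{v\}$ for some $v \in V(G)$. Hence $con(G) = n-1$ holds if and only if there is a vertex $v$ for which $V(G)\setminus\{v\}$ is convex, and the theorem follows once I establish the key claim: \emph{$V(G)\setminus\{v\}$ is convex if and only if $v$ is simplicial in $G$}.

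For the ``only if'' direction of the claim I argue by contraposition. If $v$ is not simplicial, then it has two neighbors $a,b$ that are nonadjacent, so $d_G(a,b)=2$ and the path $a,v,b$ is a shortest $(a,b)$-path. Consequently $v \in I[a,b]$ while $a,b \in V(G)\setminus\{v\}$, witnessing that $V(G)\setminus\{v\}$ is not convex. For the ``if'' direction, suppose $v$ is simplicial and, for contradiction, that some shortest path $P$ between two vertices $u,w \ne v$ passes through $v$. Then $v$ is an internal vertex of $P$, so it has a predecessor $a$ and a successor $b$ on $P$ with $a,b \in N_G(v)$ and $a \ne b$. Because $v$ is simplicial, $a$ and $b$ are adjacent, so replacing the subpath $a,v,b$ by the edge $ab$ yields a strictly shorter $(u,w)$-walk, contradicting the minimality of $P$. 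Therefore no shortest path between two vertices of $V(G)\setminus\{v\}$ meets $v$, that is, $I[V(G)\setminus\{v\}] = V(G)\setminus\{v\}$, so the set is convex. Combining the two directions with the reduction above completes the proof.

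The step I expect to require the most care is the ``if'' direction: one must verify that $v$ genuinely occurs as an \emph{internal} vertex of $P$ (which uses $u,w \ne v$) and that the two neighbors flanking it on $P$ are distinct, so that the simpliciality hypothesis applies and the shortcut produces a strictly shorter path. The connectivity hypothesis is used implicitly throughout to guarantee that the relevant distances, and hence the closed intervals, are well defined.
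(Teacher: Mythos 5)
Your proof is correct. The paper does not actually prove this statement---it is imported from \cite{chartrand2002convexity}---but your reduction to the claim that $V(G)\setminus\{v\}$ is convex if and only if $v$ is simplicial, together with the shortcut argument for the ``if'' direction and the two-nonadjacent-neighbours witness for the ``only if'' direction, is precisely the standard proof of this characterization.
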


We begin our contributions by determining in Theorem~\ref{theo:convexityDisconnected} the convexity number for complementary prisms of disconnected graphs. 
We first show Proposition~\ref{prop:kInPath} that will be useful for the subsequent results.
 
\begin{proposition} \label{prop:kInPath}
Let $G$ be a graph, $S \subseteq V(G\overline{G})$, and $v_1 \dots v_k$ be a path in $G$, for $k \geq 2$. 
If $\{v_1,\overline{v}_2,\dots,\overline{v}_k\}$ $\subseteq H(S)$, then $v_k \in H(S)$.
\end{proposition}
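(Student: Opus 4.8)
The plan is to prove the statement by induction on $k$, reducing everything to a single atomic observation about shortest paths of length two in $G\overline{G}$. The crucial fact I would establish first is the following: whenever $a,b \in V(G)$ with $ab \in E(G)$, the vertices $a$ and $\overline{b}$ are non-adjacent in $G\overline{G}$, since the only edges joining $V(G)$ to $V(\overline{G})$ are the matching edges $v\overline{v}$ and here $a \neq b$; hence $d_{G\overline{G}}(a,\overline{b}) \geq 2$. On the other hand, $a,b,\overline{b}$ is an $(a,\overline{b})$-path using the edge $ab \in E(G)$ together with the matching edge $b\overline{b}$, so $d_{G\overline{G}}(a,\overline{b}) = 2$ and this path is a shortest one. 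Therefore $b \in I[a,\overline{b}]$, and since $H(S)$ is convex (closed under the interval operation), this yields the key step: \emph{if $a \in H(S)$, $\overline{b} \in H(S)$, and $ab \in E(G)$, then $b \in H(S)$.}

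For the base case $k=2$, we are given $v_1, \overline{v}_2 \in H(S)$ together with $v_1 v_2 \in E(G)$, so the key step immediately gives $v_2 \in H(S)$. For the inductive step, I would assume the claim for every path on fewer than $k$ vertices. Given the path $v_1 \dots v_k$ with $\{v_1, \overline{v}_2, \dots, \overline{v}_k\} \subseteq H(S)$, the subpath $v_1 \dots v_{k-1}$ satisfies $\{v_1, \overline{v}_2, \dots, \overline{v}_{k-1}\} \subseteq H(S)$, so the induction hypothesis gives $v_{k-1} \in H(S)$. Applying the key step with $a = v_{k-1}$ and $b = v_k$, and using $v_{k-1}v_k \in E(G)$ together with $v_{k-1} \in H(S)$ and $\overline{v}_k \in H(S)$, I conclude $v_k \in H(S)$, as desired. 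Equivalently, one may phrase this as an iterated argument, successively deducing $v_2, v_3, \dots, v_k \in H(S)$ by one application of the key step each.

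The only subtlety, and the step I would be most careful about, is verifying that $a,b,\overline{b}$ is genuinely a shortest path, i.e.\ that no $(a,\overline{b})$-path of length $1$ exists. This is precisely where the structure of the complementary prism enters: the perfect matching contributes only the edges $v\overline{v}$, so a vertex of $V(G)$ is never adjacent to a non-corresponding vertex of $V(\overline{G})$, which forces the distance to be exactly $2$. Once this is in place, the remainder is a routine unwinding of the definition of the closed interval and the convexity of $H(S)$.
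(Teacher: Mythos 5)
Your proposal is correct and follows essentially the same route as the paper: induction on $k$, with the inductive step resting on the observation that $v_k \in I[v_{k-1},\overline{v}_k]$ because $v_{k-1}v_k\overline{v}_k$ is a shortest path in $G\overline{G}$. Your explicit verification that $d_{G\overline{G}}(v_{k-1},\overline{v}_k)=2$ is a detail the paper leaves implicit, but the argument is the same.
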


\begin{proof}
The proof is by induction on $k$. First, let $k = 2$. Since $v_1v_2 \in E(G)$ and $v_1, \overline{v}_2 \in H(S)$, $v_2 \in I[v_1, \overline{v}_2]$. Now, let $k > 2$. Let $v_1 \dots v_{k-1}v_k$ be a path in $G$ and suppose that $\{v_1,\overline{v}_2,\dots,\overline{v}_{k-1},\overline{v}_k\} \subseteq H(S)$. By induction hypothesis $v_{k-1} \in H(S)$, which implies that $v_k \in I[v_{k-1}, \overline{v}_k]$. Therefore, it follows that $v_k \in H(S)$, for $k \geq 2$. 
\end{proof}

\begin{theorem}
\label{theo:convexityDisconnected}
Let $G$ be a disconnected graph of order $n$, and $k$ be the order of a minimum component of $G$.  Then, $con(G\overline{G}) = 2n - k$.
\end{theorem}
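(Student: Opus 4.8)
The plan is to prove the two inequalities $con(G\overline{G}) \ge 2n-k$ and $con(G\overline{G}) \le 2n-k$ separately. Throughout I fix a component $G_1$ of $G$ with $|V(G_1)| = k$, and I record the basic metric fact that since $G$ is disconnected, $\overline{G}$ is connected of diameter at most $2$; consequently, inside $G\overline{G}$ any two $\overline{G}$-side vertices are at distance at most $2$ and all their shortest paths stay on the $\overline{G}$-side, since any detour through the $G$-side (leaving and re-entering via matching edges) costs at least $3$.

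For the lower bound I would exhibit the explicit set $S = V(G\overline{G}) \setminus V(G_1)$, which deletes exactly the $G$-side copies of the minimum component, so $|S| = 2n - k$ and $S$ is proper. To see that $S$ is convex I take any $v \in V(G_1)$ and show $v \notin I[x,y]$ for all $x,y \in S$. The only neighbour of $v$ outside $V(G_1)$ is its matching partner $\overline{v}$; hence a shortest $(x,y)$-path through $v$ would enter and leave $v$ only via $\overline{v}$ or via $G$-side neighbours lying in $V(G_1)$. Because the $G$-side of $G_1$ is joined to the rest of $G\overline{G}$ solely by matching edges, such a path must leave $V(G_1)$ through one or two matching edges, producing a $\overline{G}$-side-to-$\overline{G}$-side subpath of length at least $3$. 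As those endpoints are at distance at most $2$, the path cannot be shortest. Therefore $I[S]=S$ and $con(G\overline{G}) \ge 2n-k$.

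For the upper bound I would let $W$ be a proper convex set with $|W| \ge 2n-k+1$, set $M = V(G\overline{G}) \setminus W$ so that $1 \le |M| \le k-1$, and force an omitted vertex back into $W$. Two forcing facts drive the argument. First, $w \in W$ implies $\overline{w} \in W$: choosing $u$ in a component of $G$ different from the one containing $w$ with $\overline{u} \in W$ (possible because the remaining components together have at least $k > |M|$ vertices), the walk $w\,\overline{w}\,\overline{u}$ is a geodesic, so $\overline{w} \in I[w,\overline{u}] \subseteq W$. Second, applying Proposition~\ref{prop:kInPath} to an edge $uv \in E(G)$ gives that $u \in W$ and $\overline{v} \in W$ force $v \in W$ (equivalently $v \in I[u,\overline{v}]$ via the geodesic $u\,v\,\overline{v}$).

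Writing $M_G = M \cap V(G)$, the first fact yields $M_G \neq \emptyset$, and the second shows that any $v \in M_G$ with $\overline{v} \in W$ must have all of its $G$-neighbours in $M_G$. I would then conclude that $M_G$ is closed under taking $G$-neighbourhoods, hence is a union of components, forcing $|M_G| \ge k$ and contradicting $|M_G| \le |M| \le k-1$. The main obstacle is exactly the vertices omitted on both sides (both $v$ and $\overline{v}$ in $M$), for which neither forcing fact applies directly; handling these is where the real work lies. For such a $v$ I would instead force $\overline{v}$ back in by locating an edge $ab \in E(G)$ with $\overline{a},\overline{b} \in W$ and $v$ non-adjacent to both $a$ and $b$, so that $\overline{a}\,\overline{b}$ is a distance-$2$ pair on the $\overline{G}$-side with $\overline{v}$ a common neighbour, whence $\overline{v} \in I[\overline{a},\overline{b}] \subseteq W$. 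When $k=1$ there is nothing to prove, as $|M| \le 0$ already contradicts properness; when $k \ge 2$ every component is non-trivial, which supplies the edges and geodesics needed to find such $a,b$ while only $k-1$ vertices are missing. Carrying out this edge-selection carefully, through a short case analysis on where the omitted vertices sit relative to the component of $v$, is the crux that completes $con(G\overline{G}) \le 2n-k$.
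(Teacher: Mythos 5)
Your lower-bound half is correct and is essentially the paper's own argument: the same set $S=V(G\overline{G})\setminus V(G_1)$, justified by the same metric obstruction (a geodesic re-entering $V(G_1)$ would contain a $\overline{G}$-side pair joined by a subpath of length at least $3$, while $\overline{G}$ has diameter at most $2$). Your two forcing facts and the deduction $M_G\neq\emptyset$ are also sound. The problem is the step you yourself flag as the crux, and it is a genuine gap, not just an omitted routine verification: the proposed edge-selection for doubly-omitted vertices provably cannot always be carried out. Take $G=K_{1,4}\cup K_{1,4}$ with centers $c_1,c_2$, so $n=10$, $k=5$, and let $M=\{c_1,c_2,\overline{c}_1,\overline{c}_2\}$, $W=V(G\overline{G})\setminus M$ with $|W|=16=2n-k+1$. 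Both of your forcing facts are vacuously satisfied by $W$ (every leaf has its partner in $W$; every edge of $G$ meets a center whose partner is missing), and for $v=c_1$ there is \emph{no} edge $ab\in E(G)$ with $\overline{a},\overline{b}\in W$ and $v$ non-adjacent to both: every edge of $G$ is incident to $c_1$ or to $c_2$, and $\overline{c}_2\notin W$. So your toolkit terminates without a contradiction, even though $W$ is in fact not convex.

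What actually kills this $W$ is a forcing mechanism absent from your list: two non-adjacent vertices of $W\cap V(G)$ with a common $G$-neighbour are at distance $2$ in $G\overline{G}$, so that common neighbour lies in their interval (here $c_1\in I[l_1,l_2]$ for two leaves of the first star). Once you add this rule, the clean ``$M_G$ is closed under neighbourhoods, hence a union of components'' conclusion no longer drops out of purely local forcing, and some global accounting is needed. This is precisely where the paper goes a different way: it first shows that for an oversized convex set $S'$ the trace $S'\cap V(\overline{G})$ must be a clique (otherwise the whole graph is generated, using Proposition~\ref{prop:kInPath}), deduces that $S'\cap V(G_i)$ is an independent set $C_i$ dominated by $V(G_i)\setminus C_i$, and then uses exactly the common-neighbour forcing above to prove the counting bound $|C_i\cap S'|\le |V(G_i)\setminus C_i|$, summing over components to get $|S'|\le n<2n-k$. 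To rescue your complement-based argument you would need to import that counting step (or an equivalent), so as written the upper bound is not established.
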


\begin{proof}
Let $G_1, \dots, G_\ell$ be the components of $G$, for $\ell \geq 2$. We can sort the components $G_1, \dots, G_\ell$ of $G$ in non-decreasing order of size (number of vertices). Then, $G_1$ is a component of minimum order, say $|V(G_1)| = k$. 
If $k = 1$, then $G_1$ is a trivial component. Hence, the unique vertex $v \in V(G_1)$ is a simplicial vertex, and the result $con(G\overline{G}) = n-1$ follows from Theorem~\ref{theo:chartrand2002}. 
We then consider that $|V(G_i)| \geq 2$, for every $i \in [\ell]$.
Let $S = V(G\overline{G}) \setminus V(G_1)$. See Figure~\ref{fig:con_desc} for an illustration.
We show that $S$ is a convex set of $G\overline{G}$.

Suppose, by contradiction, that $S$ is not convex. Then, there must be a shortest path of the form $\bar{x}xy\bar{y}$ for $x,y \in V(G_1)$. But this is a contradiction, since $d_G(\bar{x},\bar{y}) = 2$.

%

Next, we show that $S$ is maximum. By contradiction, suppose that there exists a convex set $S' \subseteq V(G\overline{G})$ such that $|S'| > |S|$. 

Since $|S'| > 2n - k$, we have that $S' \cap V(G_j) \neq \emptyset$, and $S' \cap V(\overline{G}_j) \neq \emptyset$, for every $j \in [\ell]$. We divide the proof in two cases.

\medskip
\noindent {\bf Case 1} {\it $S' \cap V(\overline{G})$ is not a clique.}
\medskip

Let $\overline{x}, \overline{y} \in S' \cap V(\overline{G})$ such that $ \bar{x} \bar{y} \notin E(\overline{G})$. 
By the definition of complementary prism, every vertex in $V(\overline{G}_i)$ is adjacent to every vertex in $V(\overline{G}_j)$, for every $i,j \in [\ell]$, $i \neq j$. 
This implies that $\overline{x},\overline{y} \in V(\overline{G}_i)$, for some $i \in [\ell]$, and $V(\overline{G}) \setminus V(\overline{G}_i) \subseteq I_{G\overline{G}}[\overline{x},\overline{y}]$. 

We know that $|V(G_k)| \geq 2$, for every $k \in [\ell]$. Let $j \in [\ell] \setminus \{i\}$. Since $G_j$ is a connected component, and $|V(G_j)| \geq 2$, there exist $u,v \in V(G_j)$ such that $uv \in E(G)$. Thus, $\bar{u}\bar{v} \notin E(\overline{G})$. Since $\bar{u}, \bar{v} \in I_{G\overline{G}}[\overline{x},\overline{y}]$, we obtain that $V(\overline{G}_i) \subseteq I_{G\overline{G}}[\overline{u},\overline{v}]$. Hence, $V(\overline{G}) \subseteq  H_{G\overline{G}}(\{\overline{x}, \overline{y}\})$.
Since $S'$ is convex, then $V(\overline{G}) = S' \cap V(\overline{G})$. Since $S' \cap V(G_i) \neq \emptyset$, for every $i \in [\ell]$, and $G_i$ is connected, Proposition~\ref{prop:kInPath} implies that $V(G) \subseteq H_{G\overline{G}}(S')$, a contradiction.

\medskip
\noindent {\bf Case 2} {\it $S' \cap V(\overline{G})$ is a clique.}
\medskip

Consider that $S' \cap V(\overline{G})$ is maximum clique. 
Let $\overline{C}_i = S' \cap V(\overline{G}_i)$, for every $i \in [\ell]$. We know that $C_i$ is an independent set, and $\overline{C}_i$ is a clique. We claim that $S' \cap (N_G(C_i) \cup \overline{N_G(C_i)})  = \emptyset$, for every $i \in [\ell]$. In fact, we prove a stronger statement, $S' \cap (V(G_i\overline{G}_i) \setminus (C_i \cup \overline{C}_i))  = \emptyset$, for every $i \in [\ell]$.

\medskip
\noindent {\bf Claim 1} {\it For every $i \in [\ell]$, $S' \cap (V(G_i\overline{G}_i) \setminus (C_i \cup \overline{C}_i))  = \emptyset$.}
\medskip

\noindent {\it Proof of Claim 1} 
First, recall that $\overline{u}, \overline{v} \in I_{G\overline{G}}[u,v]$, for every $u \in V(G_i)$ and $v \in  V(G_{j})$, for every $i,j \in [\ell]$, $i \neq j$.
Let $i \in [\ell]$. By contradiction, suppose that there exists $v \in S' \cap (V(G_i\overline{G}_i) \setminus (C_i \cup \overline{C}_i))$. 

Since $\overline{C}_i$ is maximum, if $v \in S' \cap (V(\overline{G}_i) \setminus \overline{C}_i )$ then $S' \cap V(\overline{G})$ contains two nonadjacent vertices, a contradiction. Then, suppose that $v \in S' \cap (V(G_i) \setminus C_i )$.
Since $S' \cap V(G_j) \neq \emptyset$, for every $j \in [\ell]$, we have that $\overline{v} \in I_{G\overline{G}}[u,v]$, for some $u \in S' \cap V(G_j)$, $i \neq j$. Since $\overline{C}_i$ is maximum, there exists $\overline{w} \in \overline{C}_i$ such that $\bar{v}\bar{w} \notin E(\overline{G})$. Consequently $S' \cap V(\overline{G})$ contains two nonadjacent vertices, a contradiction. \hfill\ensuremath{\square}

\medskip
\noindent {\bf Claim 2} {\it For every $i \in [\ell]$, $|C_i \cap S'| \leq |V(G_i) \setminus C_i|$.}
\medskip

\noindent {\it Proof of Claim 2} 
Let $i \in [\ell]$. 
By Claim 1 $S' \cap (V(G_i) \setminus C_i) = \emptyset$.
Since $S' \cap V(G_i) \neq \emptyset$, we have that $|C_i \cap S'| \geq 1$. 
By contradiction, suppose that $|C_i \cap S'| > |V(G_i) \setminus C_i|$. 
Let $|C_i \cap S'| = p$. 

If $p = 1$, then $|V(G_i) \setminus C_i| = 0$. Since $C_i$ is an independent set, it follows that $G_i$ is disconnected, a contradiction. Then, consider $p \geq 2$.
Let $C_i  \cap S' = \{u_1, \dots, u_p\}$, and $V(G_i) \setminus C_i = \{ v_1, \dots, v_{p-1}\}$.

Since $C_i$ is an independent set, and $G_i$ is connected, it follows that every vertex in $C_i \cap S'$ is adjacent to at least one vertex in $V(G_i) \setminus C_i$. Since $|C_i \cap S'| > |V(G_i) \setminus C_i|$, we have that there exist $j, j' \in [p]$, $j \neq j'$, such that $u_jv_q, u_{j'}v_q \in E(G)$, for some $q \in [p-1]$. Then, $v_q \in I_{G\overline{G}}[u_j, u_{j'}]$. Since $S'$ is convex, $v_q \in S'$. But that is a contradiction, since $S' \cap (V(G_i) \setminus C_i ) = \emptyset$.
\hfill\ensuremath{\square}
\medskip

To conclude the proof, we show that $|S'| \leq n$. In view of the above statements, for every $i \in [\ell]$, it follows that
\begin{align*}
|S' \cap V(G_i\overline{G}_i)|  &= |S' \cap V(G_i)| + |S' \cap V(\overline{G}_i)| & \\
							    &= |C_i \cap S'| + |\overline{C_i}| & \\
							    &= |C_i \cap S'| + |V(\overline{G}_i)| - |V(\overline{G}_i) \setminus \overline{C}_i| & \\
							    &\leq |V(G_i) \setminus C_i| + |V(\overline{G}_i)| - |V(\overline{G}_i) \setminus \overline{C}_i| & \text{(by Claim 2)} \\
							    &= |V(G_i)|. & 
\end{align*}

Since $|S' \cap V(G_i\overline{G}_i)| \leq |V(G_i)|$, for every $i \in [\ell]$, we obtain that \begin{displaymath} |S'| \leq \sum_{i=1}^{\ell} |V(G_i)| = n\text{,} \end{displaymath} a contradiction. Therefore  $S$ is a maximum convex set of $G\overline{G}$, and $con(G\overline{G}) \leq 2n -k$, which completes the proof. 
\end{proof}

Figure~\ref{fig:con_desc} shows an illustration of a proper convex set of $G\overline{G}$, represented by the black vertices. Consider $G_1$ the component of minimum order of $G$.

\begin{figure}[htb]
\centering
{\setlength{\fboxsep}{10pt}
\setlength{\fboxrule}{0.3pt}
\fbox{
\includegraphics[width=0.32\textwidth]{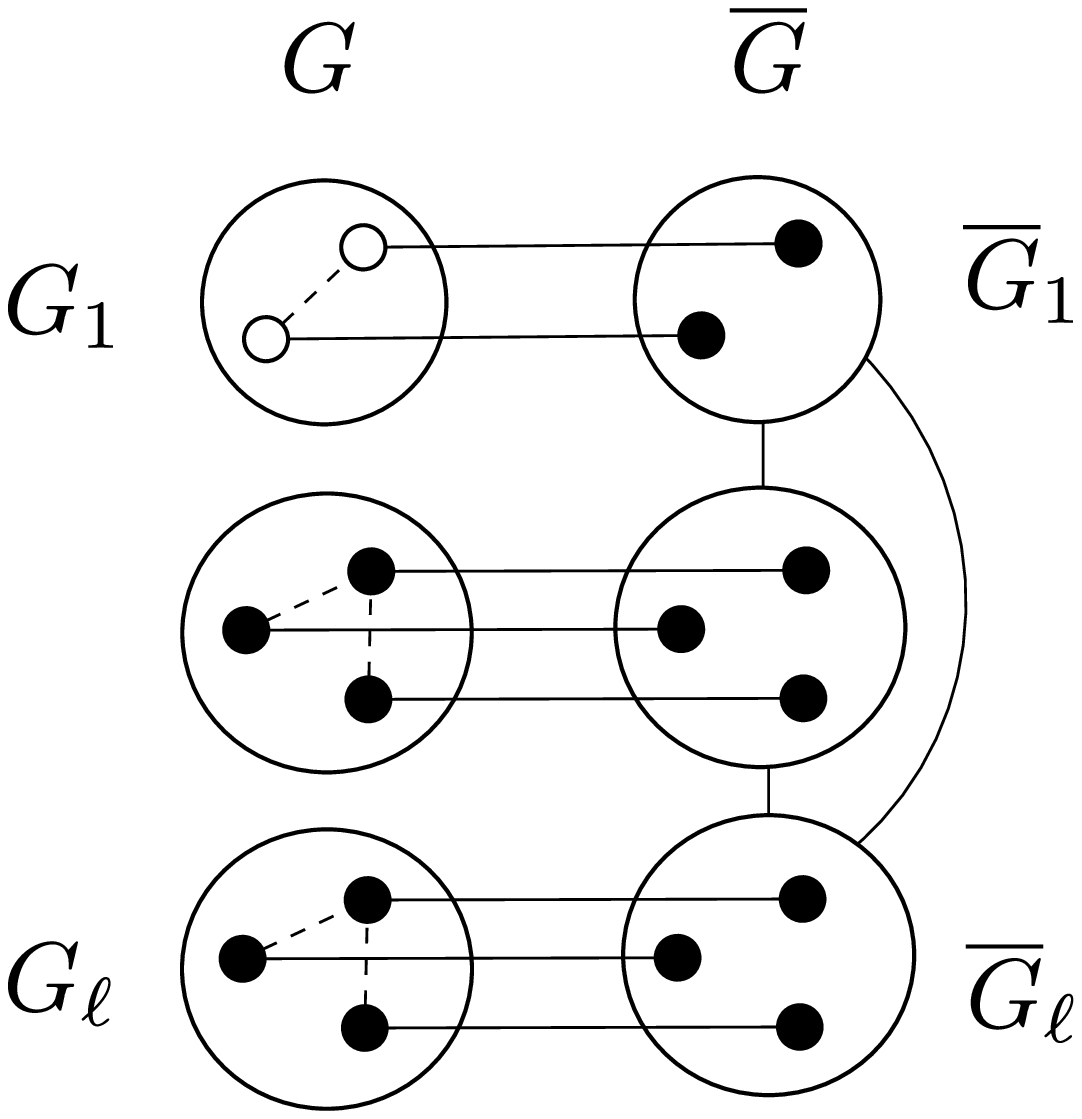}
}}
\caption{Example of a proper convex set of $G\overline{G}$.}
\label{fig:con_desc}
\end{figure}

As a consequence of Theorem~\ref{theo:convexityDisconnected}, we have:

\begin{corollary} \label{cor:convexityCograph}
Let $G$ be a connected cograph of order $n$, and $k$ be the order of a minimum component of $\overline{G}$. Then,
$con(G\overline{G}) = 2n - k$.
\end{corollary}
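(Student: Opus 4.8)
The plan is to reduce this corollary directly to Theorem~\ref{theo:convexityDisconnected} by exploiting the inherent symmetry of the complementary prism construction together with a structural property of cographs. First I would observe that the complementary prism is insensitive to which of the two complementary graphs we regard as the ``base'': setting $H = \overline{G}$, we have $\overline{H} = \overline{\overline{G}} = G$, and the graph $H\overline{H}$ has exactly the same vertex set $V(G)\cup V(\overline{G})$ and the same edge set (the edges of $G$, the edges of $\overline{G}$, and the matching $\{v\overline{v}\}$) as $G\overline{G}$. Hence $G\overline{G} \cong \overline{G}\,\overline{\overline{G}}$, and in particular $con(G\overline{G}) = con(\overline{G}\,\overline{\overline{G}})$.

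Next I would invoke the well-known fact that for a cograph on at least two vertices at least one of $G$ and $\overline{G}$ is disconnected; since the complement of a disconnected graph is always connected, exactly one of them is connected. As $G$ is assumed connected (and $\overline{G}$ is again a cograph), this forces $\overline{G}$ to be disconnected whenever $n \geq 2$. Thus $\overline{G}$ is a disconnected graph of order $n$ whose minimum component has order $k$, so it satisfies the hypotheses of Theorem~\ref{theo:convexityDisconnected}.

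Applying Theorem~\ref{theo:convexityDisconnected} to the disconnected graph $\overline{G}$ then yields $con(\overline{G}\,\overline{\overline{G}}) = 2n - k$, and combining this with the identification of the first step gives $con(G\overline{G}) = 2n - k$, as required. The only remaining point is the degenerate case $n = 1$, where $G = \overline{G} = K_1$ and $G\overline{G} = K_2$, so that $con(G\overline{G}) = 1 = 2n - k$ holds by direct inspection.

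I expect the main (and essentially the only) subtlety to be the justification that $\overline{G}$ is disconnected: this is precisely where the cograph hypothesis is used, and it is the step that makes Theorem~\ref{theo:convexityDisconnected} applicable. The symmetry of the complementary prism is immediate from the definition, and once the disconnectivity of $\overline{G}$ is established no further convexity argument is needed, the corollary being a clean specialization of the preceding theorem.
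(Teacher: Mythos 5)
Your proposal is correct and follows essentially the same route as the paper: the symmetry $G\overline{G}\cong\overline{G}G$, the fact that a connected cograph has a disconnected complement, and then a direct application of Theorem~\ref{theo:convexityDisconnected}. Your separate treatment of the degenerate case $n=1$ (where both $G$ and $\overline{G}$ are connected) is a small extra precaution the paper omits, but it does not change the argument.
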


\begin{proof} 
Since a cograph $G$ is connected if and only if $\overline{G}$ is disconnected~\citep{corneil1981complement}, and $G\overline{G}$ is isomorphic to $\overline{G}G$, the result follows from Theorem~\ref{theo:convexityDisconnected}. 
\end{proof}

\subsection{NP-Completeness}

Next, we show a hardness result of the geodetic number for complementary prims. We first present  Proposition~\ref{prop:vizinhancaConjConvexo}, and we define the two decision problems to be considered.

\begin{proposition}\label{prop:vizinhancaConjConvexo}
Let $G$ be a graph, $S$ be a convex set of $G\overline{G}$, $S' = S \cap V(\overline{G})$. If $\overline{A} = N_{\overline{G}}(S')$ then $S \cap A = \emptyset$.
\end{proposition}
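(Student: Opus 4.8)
The plan is to argue by contradiction. Suppose some vertex $a \in S \cap A$. Since $A \subseteq V(G)$ is the set corresponding to $\overline{A} = N_{\overline{G}}(S')$, we have $\overline{a} \in N_{\overline{G}}(S')$, which means $\overline{a} \notin S'$ and there is a vertex $\overline{s} \in S'$ with $\overline{a}\,\overline{s} \in E(\overline{G})$. I would record the three objects to be played against each other: the $G$-vertex $a \in S$, the matching edge $a\overline{a}$, and the $\overline{G}$-neighbor $\overline{s} \in S' \subseteq S$ of $\overline{a}$.

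Next I would locate $\overline{a}$ on a shortest path between two elements of $S$, so that convexity forces $\overline{a} \in S$. Consider the path $a\,\overline{a}\,\overline{s}$: its first edge is the matching edge $a\overline{a}$ and its second edge is $\overline{a}\,\overline{s} \in E(\overline{G})$, so it is a path of length $2$ in $G\overline{G}$ with interior vertex $\overline{a}$. To see it is geodesic, note that $\overline{a} \notin S'$ gives $\overline{a} \neq \overline{s}$, hence $a \neq s$; since the only edges between $V(G)$ and $V(\overline{G})$ in a complementary prism are the matching edges $v\overline{v}$, the vertices $a$ and $\overline{s}$ are nonadjacent, so $d_{G\overline{G}}(a,\overline{s}) \geq 2$. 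Combined with the exhibited length-$2$ path this yields $d_{G\overline{G}}(a,\overline{s}) = 2$, and therefore $\overline{a} \in I_{G\overline{G}}[a,\overline{s}]$.

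Finally I would invoke convexity: since $a, \overline{s} \in S$ and $S = I_{G\overline{G}}[S]$, the interval $I_{G\overline{G}}[a,\overline{s}]$ is contained in $S$, whence $\overline{a} \in S$. As $\overline{a} \in V(\overline{G})$, this places $\overline{a}$ in $S \cap V(\overline{G}) = S'$, contradicting $\overline{a} \in N_{\overline{G}}(S')$, which is disjoint from $S'$ by the definition of the open neighborhood of a set. Hence no such $a$ exists and $S \cap A = \emptyset$.

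I expect the only delicate point to be the verification that the length-$2$ path is actually geodesic, that is, that $a$ and $\overline{s}$ lie at distance exactly $2$; this rests entirely on the structural fact that the cross edges in $G\overline{G}$ are precisely the matching edges, so that the distinct indices $a \neq s$ preclude a direct edge. Everything else is a routine application of the convexity of $S$.
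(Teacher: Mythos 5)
Your proof is correct and follows essentially the same route as the paper's: take a vertex of $S\cap A$, use its matched copy's adjacency to a vertex of $S'$ to place that copy on a geodesic between two vertices of $S$, and derive a contradiction with the disjointness of $N_{\overline G}(S')$ and $S'$. The only difference is that you explicitly verify the length-$2$ path is a geodesic (via $a\neq s$ and the fact that cross edges are exactly the matching edges), a detail the paper leaves implicit.
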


\begin{proof}
Suppose that $u \in S \cap A$. Since $\bar{u}\bar{v} \in E(\overline{G})$, for some $\overline{v} \in S'$, we have that $\overline{u} \in I_{G\overline{G}}[u,\overline{v}]$. We know that $S$ is convex, which implies that $\overline{u} \in S$. But this is a contradiction, since $\overline{u} \notin S \cap V(\overline{G})$. 
\end{proof}

\begin{problem}{\textsc{Clique}} \citep{garey1979computers}.  \\
\textbf{Instance:} A graph $G$ and an integer $k$.\\
\textbf{Question:} Does $G$ have a clique of order at least $k$?
\end{problem}

\begin{problem}{\textsc{Convexity Number}} \citep{dourado2012convexity}.  \\
\textbf{Instance:} A graph $G$ and an integer $k$.\\
\textbf{Question:} Does $G$ have a proper convex set of order at least $k$?
\end{problem}

\begin{theorem}
\label{theo:conNPComplete}
\textsc{Convexity Number} is $\NP$-complete even restricted to complementary prisms.
\end{theorem}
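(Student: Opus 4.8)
The plan is to establish membership in $\NP$ and then reduce from \textsc{Clique}, which is $\NP$-complete. Membership is routine: a candidate proper convex set $S$ of the input complementary prism $H\overline{H}$ has linear size, and one can compute all pairwise distances and verify both $I[S]=S$ and $S\neq V(H\overline{H})$ in polynomial time, so $S$ serves as a certificate.

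For hardness, given a \textsc{Clique} instance $(G,k)$ with $|V(G)|=n$, I would construct in polynomial time a graph $H$ and an integer $k'$ so that the target instance is $(H\overline{H},k')$. The construction must achieve two things. First, cliques of $G$ must persist as convex sets: the key elementary observation is that any set $Q\subseteq V(H)$ of pairwise adjacent vertices is convex in $H\overline{H}$, because every interval $I[u,v]$ with $u,v\in Q$ is trivial. Thus a $k$-clique of $G$ (sitting inside the $G$-copy of $V(H)$), possibly together with a fixed \emph{backbone} of gadget vertices that is forced into every large convex set, yields a proper convex set of size $k'$. Second, the gadget attached to $G$ must be designed so that whenever a convex set contains two \emph{non-adjacent} vertices of the $G$-copy, its convex hull spreads over essentially all of $H\overline{H}$ and hence fails to be proper; this is where Propositions~\ref{prop:kInPath} and~\ref{prop:vizinhancaConjConvexo} do the work.

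The forward implication is then immediate from the clique-is-convex observation. The reverse implication---that a proper convex set of size at least $k'$ forces a $k$-clique in $G$---is the crux and the main obstacle. Here I would take a maximum proper convex set $S$, put $S'=S\cap V(\overline{H})$, and apply Proposition~\ref{prop:vizinhancaConjConvexo} to exclude from $S$ all $H$-vertices corresponding to $N_{\overline{H}}(S')$; dually, Proposition~\ref{prop:kInPath} propagates membership in $H(S)$ along paths of $H$. Feeding in the gadget's adjacencies, these two facts should show that the $V(H)$-part of $S$ induces a clique (otherwise the hull engulfs a whole side, contradicting properness), while the $V(\overline{H})$-part and the backbone contribute only a fixed, bounded number of vertices. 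A counting argument then gives $|S|\ge k'$ only if the clique part has at least $k$ vertices inside the $G$-copy, yielding the desired clique of $G$. The genuine difficulty lies in engineering the gadget so that these spreading and pruning arguments are simultaneously tight on the $V(H)$- and $V(\overline{H})$-sides, ensuring that the unique way to reach size $k'$ is through an honest $k$-clique of $G$ and that no spurious large convex set---for instance one living mostly on the $\overline{H}$-side, which corresponds to an independent set of $H$---can compete.
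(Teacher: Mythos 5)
Your outline correctly identifies the skeleton of the paper's argument: membership in $\NP$ via polynomial-time hull computation, a reduction from \textsc{Clique}, the role of Proposition~\ref{prop:kInPath} in spreading hulls along paths and of Proposition~\ref{prop:vizinhancaConjConvexo} in pruning the $H$-side once the $\overline{H}$-side is pinned down, and a final counting step forcing $|S\cap V(G)|\ge k$. But the proposal stops exactly where the proof has to begin: no gadget is actually constructed, and you yourself flag that ``the genuine difficulty lies in engineering the gadget.'' Without an explicit $H$ and an explicit threshold $k'$, none of the claimed consequences can be checked---in particular the central claim that two non-adjacent $G$-vertices in $S$ make the hull engulf a whole side is precisely the property the gadget must be built to guarantee, and it does not follow from the two propositions alone. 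The paper's construction (three cliques $U$, $X$, $Z$ of sizes $n$, $4n$, $2$, an independent pair $Y$, with $U\cup Z$ joined to $V(G)$ and $X$ joined to $U\cup Y$, and $k'=k+5n+3$) is calibrated so that $d_H(X\cup Y,Z)=3$ forces every proper convex set to have diameter at most $2$, so that $S\cap V(\overline H)$ must be a clique confined to $\overline U$ (hence of size at most $1$), and so that non-adjacent $v_1,v_2\in S\cap V(G)$ would drag $Z$ into $S$, which the pruning step has already excluded. These size choices ($4n$ for $X$, the ``$+3$'' in the threshold, the assumption $k\ge 3$) are exactly what make the counting inequalities close, and they are absent from your sketch.

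A second, smaller inaccuracy: you assert the forward implication is ``immediate from the clique-is-convex observation.'' In the paper the witness set is $C\cup U\cup X\cup Y\cup\{\overline u_1\}$, which is \emph{not} a clique ($Y$ is an independent set and $\overline u_1$ lies on the other side of the matching), so one must still verify that all pairwise distances within the witness are at most $2$ and that the corresponding intervals stay inside the set. This is short but not vacuous, and it is one of the constraints that shapes the gadget. As it stands, your text is a correct research plan rather than a proof; the statement cannot be considered established from it.
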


\begin{proof}
Since computing the convex hull of a set of vertices can be done in polynomial time, \textsc{Convexity Number} is in $\NP$.

In order to prove $\NP$-completeness, we describe a polynomial reduction from the $\NP$-complete problem \textsc{Clique} \citep{garey1979computers}. 

Let $(G, k)$ be an instance of \textsc{Clique}. We may assume that $G$ is connected, and $k \geq 3$. Let $|V(G)| = n$. We construct a graph $H$ arising from $G$ as follows. Add to $H$ three cliques $U$, $X$, and $Z$, respectively on $n$, $4n$ and $2$ vertices, say $U = \{u_1, \dots, u_n\}$, $X = \{x_1, \dots, x_{4n}\}$, and $Z = \{z_1, z_2\}$. Add to $H$ an independent set $Y = \{y_1, y_2\}$. Join every vertex in $U \cup Z$ to every vertex in $V(G)$. Also join every vertex in $X$ to $U \cup Y$. This completes the construction of $H$. Use the graph $H$ to create the complementary prism $H\overline{H}$.  See an example in Figure~\ref{fig:construcaoConGrafoVizinhancas}.
We prove that $G$ has a clique of order at least $k$ if and only if $H\overline{H}$ has a proper convex set of order at least $k + 5n + 3$.

First, we assume that $G$ has a clique $C$ of order at least $k$. Let $S \subseteq V(H\overline{H})$ such that $S = C \cup U \cup X \cup Y \cup \{ \overline{u}_1 \}$. Notice that $|S| = k + 5n + 3$. We show that $S$ is a convex set of $H\overline{H}$.

Let $w \in S \cap V(H)$. Since $d_{H\overline{H}}(w,\overline{u}_1) \leq 2$, we have that $I_{H\overline{H}}[w,\overline{u}_1] = \{w,\overline{u}_1 \}$, if $w = u_1$, or $I_{H\overline{H}}[w,\overline{u}_1] = \{w,u_1,\overline{u}_1 \}$, otherwise.
Now, let $w, w' \in S \cap V(H)$. Since $d_H(w,w') \leq 2$, and $C$ is a clique in $G$, we have that 
$I_{H\overline{H}}[w,w'] \subseteq S \cap V(\overline{H})$. Therefore $I_{H\overline{H}}[S] = S$, and $S$ is a convex set of $H\overline{H}$.

For the converse, we show two useful claims first.

\medskip
\noindent {\bf Claim 1} {\it Let $S$ be a proper convex set of $H\overline{H}$. Then, for every $w, w' \in S$, $d_{H\overline{H}}(w,w') \leq 2$.}
\medskip

\noindent {\it Proof of Claim 1} 
By contradiction, suppose that there exist $w, w' \in S$ such that $d_{H\overline{H}}(w,w') > 2$. By the definition of complementary prism, we know that either $w,w' \in V(H)$ or $w,w' \in V(\overline{H})$. 

Let $w,w' \in S \cap V(H)$. By the construction of $H$, $d_H(X \cup Y, Z)=3$; then we can consider that $w \in X \cup Y$ and $w' \in Z$. 
That way, we have that $U \cup V(G) \cup \{\overline{w},\overline{w}'\} \subseteq I_{H\overline{H}}[w,w']$. 
Since $\bar{v}\bar{w} \in E(\overline{H})$ for every $v \in V(G)$, we have that $\overline{v} \in I_{H\overline{H}}[v, \overline{w}]$. Consequently $V(\overline{G}) \subseteq I_{H\overline{H}}[ V(G) \cup \{\overline{w}\}]$. By symmetry,  $\overline{U} \subseteq I_{H\overline{H}}[U \cup \{\overline{w}' \}]$. Since $G$ is connected, there exist two nonadjacent vertices $\overline{v}_1, \overline{v}_2 \in V(\overline{G})$. This implies that $\overline{X} \cup \overline{Y} \subseteq I_{H\overline{H}}[\overline{v}_1, \overline{v}_2]$. Similarly, since $\overline{U}$ is an independent set, $\overline{Z} \subseteq I_{H\overline{H}}[\bar{U}]$. Since $V(\overline{H}) \subseteq H_{H\overline{H}}(S)$, Proposition~\ref{prop:kInPath} implies that  $V(H) \subseteq H_{H\overline{H}}(S)$, a contradiction.

Now, let $\overline{w},\overline{w}' \in S \cap V(\overline{H})$. By the construction of $H$, we can select $\overline{w} \in U$, and $\overline{w}' \in V(\overline{G})$. We have that $\overline{X} \cup \overline{Y} \cup \overline{Z} \cup \{w,w'\} \subseteq I_{H\overline{H}}[\overline{w},\overline{w}']$. 
Consequently, $\overline{U} \subseteq I_{H\overline{H}}[\bar{Z}]$, and $V(\overline{G}) \subseteq I_{H\overline{H}}[\bar{X}]$. Since $V(\overline{H}) \cup \{w,w'\} \subseteq H_{H\overline{H}}(S)$, Proposition~\ref{prop:kInPath} implies that  $V(H) \subseteq H_{H\overline{H}}(S)$, a contradiction. 
\hfill\ensuremath{\square}

\medskip
\noindent {\bf Claim 2} {\it Let $S$ be a proper convex set of $H\overline{H}$. Then $S \cap V(\overline{H})$ is a clique.} 
\medskip

\noindent {\it Proof of Claim 2} 
By contradiction, suppose that there exist $\overline{w}, \overline{w}' \in S \cap V(\overline{H})$ such that $\bar{w} \bar{w}' \notin E(\overline{H})$. Then, we have the following cases.

\medskip
\noindent {\bf Case 1.1} {\it $\overline{w}, \overline{w}' \in \overline{Z}$.}
\medskip

This implies that $\overline{X} \cup \overline{Y} \cup \overline{U} \subseteq I_{H\overline{H}}[\overline{w}, \overline{w}']$. Consequently, $V(\overline{G}) \subseteq I_{H\overline{H}}[\bar{X}]$. Since $d_{H\overline{H}}(\overline{U}, V(\overline{G})) = 3$, by Claim 1, $S$ is not a proper convex set, a contradiction.

\medskip
\noindent {\bf Case 1.2} {\it $\overline{w}, \overline{w}' \in \overline{X} \cup \overline{Y} \cup \overline{U}$.}
\medskip

In this case, $\overline{Z} \in I_{H\overline{H}}[\overline{w}, \overline{w}']$, then the proof follows by Case 1.1.

\medskip
\noindent {\bf Case 1.3} {\it $\overline{w}, \overline{w}' \in V(\overline{G}) \cup \overline{Z}$.}
\medskip

We have that $\overline{X} \in I_{H\overline{H}}[\overline{w}, \overline{w}']$, then the proof follows by Case 1.2.
\hfill\ensuremath{\square}

\medskip

Let $S$ be a proper convex set of $H\overline{H}$ of order at least $k + 5n + 3$.
By Claim 2, we know that $S \cap V(\overline{H})$ is a clique. Let $\overline{C} = S \cap V(\overline{H})$. To proceed with the proof, we show that $\overline{C} \cap (V(\overline{G}) \cup \overline{X} \cup \overline{Y} \cup \overline{Z}) = \emptyset$. For that, we consider two cases, $\overline{C}$ does not contain vertices from $V(\overline{G}) \cup \overline{Z}$, and $\overline{C}$ does not contain vertices from $\overline{X} \cup \overline{Y}$.

\medskip
\noindent {\bf Claim 2.1} {\it $\overline{C} \cap (V(\overline{G}) \cup \overline{Z}) = \emptyset$.}
\medskip

\noindent {\it Proof of Claim 2.1} 
Suppose, by contradiction, that $\overline{C}$ contains a vertex in $V(\overline{G})$ or in $\overline{Z}$.
If $\overline{C} \cap \overline{X} = \emptyset$, then $\overline{X} \subseteq N_{\overline{H}}(\overline{C})$. Otherwise, we have that $\overline{X} \setminus \{\overline{x}_i\} \subseteq N_{\overline{H}}(\overline{C})$, for some $i \in [4n]$. In both cases, we have that $| N_{\overline{H}}(\overline{C}) | \geq 4n-1$. For $\overline{A} = N_{\overline{H}}(\overline{C}) $, Proposition~\ref{prop:vizinhancaConjConvexo} implies that $S \cap A = \emptyset$ then $|S \cap V(H)| = |V(H)| - |A|  \leq 6n+4 - (4n-1)  = 2n+5$.
\hfill\ensuremath{\square}

So far, we conclude that the number of vertices from $S$ in $H$ is at most $2n+5$. It remains to show the maximum number of vertices from $S$ in $\overline{H}$. By the construction of $H$, a clique in $\overline{H}$ of maximum order is a proper subset of $V(\overline{G}) \cup \overline{Y}$, hence $|S \cap V(\overline{H})| < n+2$. Consequently, $|S| = |S \cap V(H)|+|S \cap V(\overline{H})| < 2n+5 + n+2 = 3n+7$, a contradiction.

\medskip
\noindent {\bf Claim 2.2} {\it $\overline{C} \cap (\overline{X} \cup \overline{Y}) = \emptyset$.}
\medskip

\noindent {\it Proof of Claim 2.2} 
We know by Claim 2.1 that $\overline{C} \cap (V(\overline{G}) \cup \overline{Z}) = \emptyset$. By contradiction, suppose that $\overline{C}$ contains a vertex from $\overline{X}$ or $\overline{Y}$. 
Let $\overline{A} = N_{\overline{H}}(\overline{C})$. In this case, we have that $V(\overline{G}) \cup \overline{Z} \subseteq \overline{A}$. Hence, $| \overline{A} | \geq n+2$. It follows from Proposition~\ref{prop:vizinhancaConjConvexo} that $|S \cap V(H)| = |V(H)| - |A|  \leq 6n+4 - (n+2)  = 5n+2$. 

Since $\overline{C} \cap (V(\overline{G}) \cup \overline{Z}) = \emptyset$ and $\overline{U} \cup \overline{X}$ is an independent set, the maximum order of a clique in $V(\overline{H})$ is $|\overline{Y}| = 2$; then $|S \cup V(\overline{H}) | \leq 2$. This implies that $|S| = |S \cap V(H)|+|S \cap V(\overline{H})| \leq 5n+2 + 2 = 5n+4$. Since $k \geq 3$, we have that $|S| \geq k+5n+3 = 5n+6$, a contradiction.
\hfill\ensuremath{\square}

\smallskip

By Claims 2.1 and 2.2, we have that a proper convex set $S$ of $H\overline{H}$ of order at least $k + 5n + 3$ is such that $S \cap \overline{U} \neq \emptyset$ or $S \cap V(\overline{H}) = \emptyset$.
We show that, in both cases, the order of $S$ implies in $|S \cap V(G)| \geq k$.

\medskip
\noindent {\bf Case 3.1} {\it $S \cap \overline{U} \neq \emptyset$.}
\medskip

Since $\overline{U}$ is an independent set, $|S \cap \overline{U}| \leq 1$. So, let $i \in [n]$, and consider that $\overline{u}_i \in S$. We have that $\overline{A} = N_{\overline{H}}(\overline{u}_i) = \overline{Z}$. By Proposition~\ref{prop:vizinhancaConjConvexo}, $S \cap Z = \emptyset$. 
By the construction of $H$, we have that $|U \cup X \cup Y| = 5n+2$.
Since $|S \cap V(H)| = |S| - 1 = k+5n+2$, we have that $|S \cap V(G)| \geq k$.

\medskip
\noindent {\bf Case 3.2} {\it $S \cap V(\overline{H}) = \emptyset$.}
\medskip

By Claim 1, we have that 

\begin{center}
\begin{tabular}{ccl}
either & $S \cap V(H) \subseteq (U \cup V(G) \cup X \cup Y)$ & (I) \\ 
or & $S \cap V(H) \subseteq (U \cup V(G) \cup Z)$ & (II). \\ 
\end{tabular} 
\end{center}

Condition II implies that the order of $S$ is at most $2n+2$, a contradiction. Then, consider that $S \subseteq (U \cup V(G) \cup X \cup Y)$. Still by the construction of $H$, $|U \cup X \cup Y| = 5n+2$. Since $|S \cap V(H)| = |S| = k+5n+3$, we obtain that $|S \cap V(G)| \geq k+1$.

\medskip
By Cases 3.1 and 3.2 we have that $|S \cap V(G)| \geq k$. It remains to show that $S \cap V(G)$ is a clique. Suppose, by contradiction, that $S \cap V(G)$ is not a clique. Then, there exist $v_1, v_2 \in V(G)$ such that $v_1v_2 \notin E(G)$. This implies that $Z \subseteq I_{H\overline{H}}[v_1,v_2]$. But, in both cases, $S \cap \overline{U} \neq \emptyset$ and  $S \cap V(\overline{H}) = \emptyset$; thus we have that $S \cap Z = \emptyset$, a contradiction. Therefore $S \cap V(G)$ is a clique of order at least $k$, which completes the proof. 
\end{proof}

Figure~\ref{fig:construcaoConGrafoVizinhancas} contains an example of graph $H\overline{H}$ constructed for Theorem~\ref{theo:conNPComplete}. Every edge joining two rectangles $A$ and $B$ represents the set of all edges joining every pair of vertices $a \in A$ and $b \in B$. The black vertices correspond to the convex set $S$ except the vertices from $V(G)$. For convenience, the edges joining corresponding vertices from $G$ to $\overline{G}$ are not depicted in the figure. 

\begin{figure}[htb]
\centering
{\setlength{\fboxsep}{10pt}
\setlength{\fboxrule}{0.3pt}
\fbox{
\includegraphics[width=0.85\textwidth]{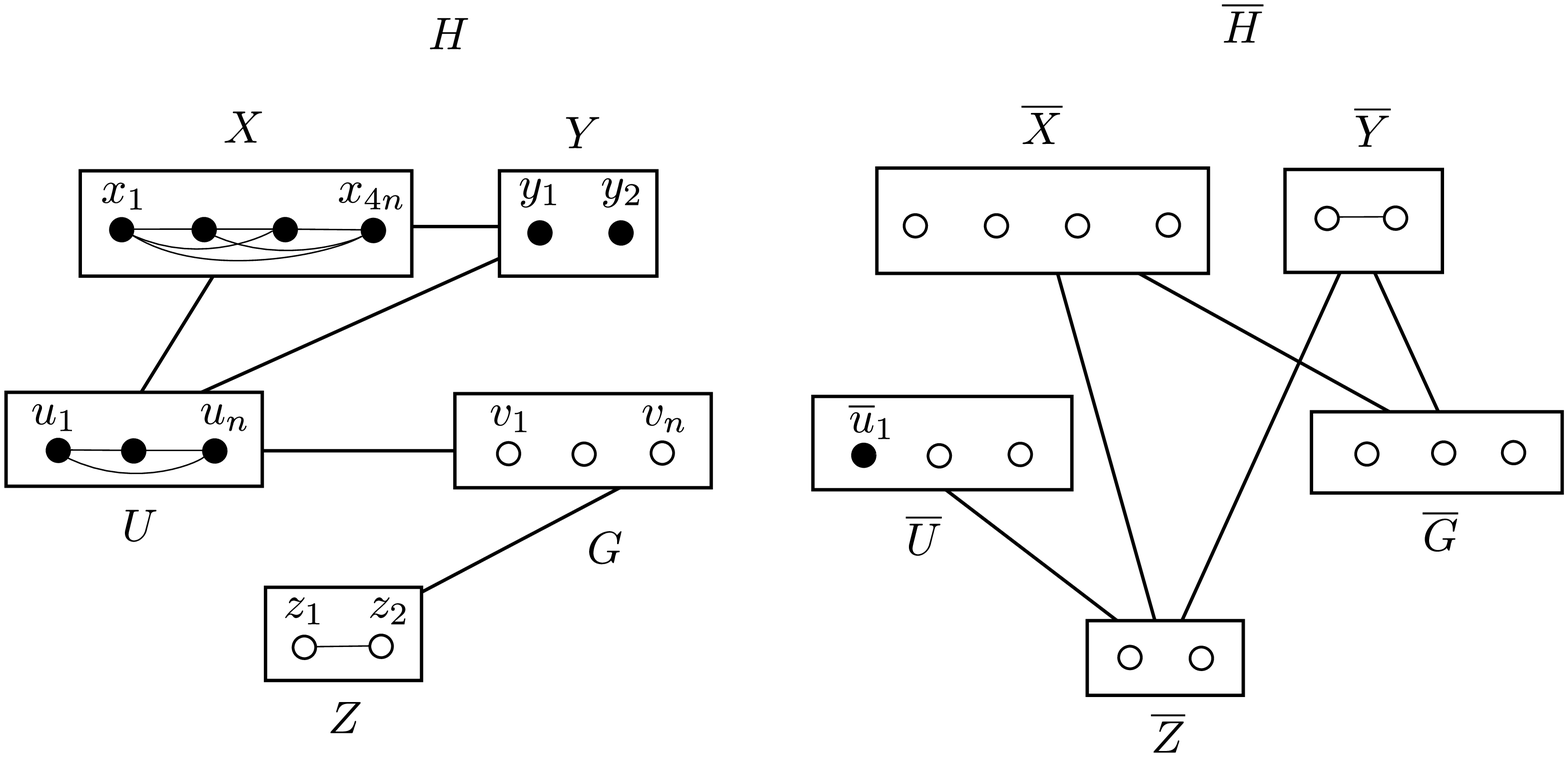}
}}
\caption{Graph $H\overline{H}$ constructed for the reduction of Theorem~\ref{theo:conNPComplete}. }
\label{fig:construcaoConGrafoVizinhancas}
\end{figure}

Notice that the graph $H$ constructed for Theorem~\ref{theo:conNPComplete} has diameter $3$. In view of the complexity result of that theorem, we finish by showing a lower bound of this parameter for graphs with restricted diameter. 

\begin{theorem}\label{theo:diam2}
Let $G$ be a graph of order $n$. If $diam(G) \neq 3$, then $con(G\overline{G}) \geq n$.
\end{theorem}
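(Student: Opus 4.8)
The hypothesis $diam(G)\neq 3$ naturally splits into three regimes, and the plan is to exhibit a proper convex set of order at least $n$ in each. The two obvious candidates are $V(G)$ and $V(\overline{G})$: both have order exactly $n$, and each is a proper subset of $V(G\overline{G})$ since the complementary side is nonempty. The regimes are: (i) $G$ disconnected; (ii) $G$ connected with $diam(G)\le 2$ (absorbing $diam(G)\in\{0,1\}$); and (iii) $G$ connected with $diam(G)\ge 4$. Regime (i) requires no new work: by Theorem~\ref{theo:convexityDisconnected} we have $con(G\overline{G})=2n-k$, where $k$ is the order of a smallest component of $G$; since $G$ has at least two components, $k\le n-1$, whence $con(G\overline{G})\ge n+1\ge n$.

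For regime (ii) I would show directly that $S=V(G)$ is convex. For adjacent $u,v\in V(G)$ we have $I_{G\overline{G}}[u,v]=\{u,v\}$. For non-adjacent $u,v$ we have $d_G(u,v)=2$ (as $diam(G)\le 2$), realized by a common neighbour lying in $V(G)$; the only alternative $(u,v)$-walks leaving $V(G)$ must use the matching edges, so they have length at least $3>2$ and are not shortest. Hence every shortest $(u,v)$-path stays in $V(G)$, so $I_{G\overline{G}}[S]=S$ and $con(G\overline{G})\ge|V(G)|=n$.

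Regime (iii) is the heart of the argument, and I would reduce it to regime (ii) by passing to the complement. Since $G\overline{G}\cong\overline{G}\,G$ and $\overline{\overline{G}}=G$, the set $V(\overline{G})$ plays in $G\overline{G}$ exactly the role that $V(G)$ plays in regime (ii); so it suffices to prove the auxiliary fact that $diam(G)\ge 4$ implies $diam(\overline{G})\le 2$, and then invoke the convexity argument of regime (ii) for $V(\overline{G})$. To prove the auxiliary fact, fix a diametral pair $a,b$ with $d_G(a,b)=diam(G)\ge 4$. Any two vertices adjacent in $\overline{G}$ are at distance $1$ there; for the remaining pairs $u,v$ (those with $uv\in E(G)$) I claim at least one of $a,b$ is a common neighbour of $u,v$ in $\overline{G}$, i.e.\ lies outside $N_G[u]\cup N_G[v]$. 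Otherwise both $a$ and $b$ would satisfy $d_G(\cdot,\{u,v\})\le 1$, and since $uv\in E(G)$ the triangle inequality would give $d_G(a,b)\le 1+1+1=3$, contradicting $d_G(a,b)\ge 4$. Thus $d_{\overline{G}}(u,v)\le 2$ for all pairs, so $diam(\overline{G})\le 2$.

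The main obstacle is this complement-diameter step: one must resist the tempting (but false) belief that $diam(G)\ge 3$ already forces $diam(\overline{G})\le 2$ — the self-complementary $P_4$ shows $diam(\overline{G})$ can equal $3$ when $diam(G)=3$, which is exactly why that case is excluded. The strict inequality $diam(G)\ge 4$ is precisely what makes the three-point estimate $d_G(a,b)\le 3$ go through. A secondary point to verify carefully throughout regimes (ii)--(iii) is that no shortest path can shortcut through the complementary side, which hinges on the length-$3$ cost of any detour across the perfect matching.
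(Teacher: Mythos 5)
Your proof is correct and follows essentially the same route as the paper: exhibit $V(G)$ as a proper convex set when $diam(G)\le 2$ (any detour through $V(\overline{G})$ costs at least $3$), and reduce the case $diam(G)\ge 4$ to it via $G\overline{G}\cong\overline{G}G$ together with the fact that $diam(G)>3$ forces $diam(\overline{G})\le 2$. The only differences are cosmetic: you prove that last fact from scratch where the paper cites \cite{goddard2011distance}, and you treat disconnected $G$ as a separate regime, which the paper absorbs into the $diam(G)\ge 4$ case.
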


\begin{proof}
Consider first that $diam(G) \leq 2$.
Let $u,v \in V(G)$. Since $diam(G) \leq 2$, and every $(u,v)$-path passing through $V(\overline{G})$ has length at least $3$, we obtain that $I_{G\overline{G}}[u,v] \cap V(\overline{G}) = \emptyset$. Hence, $V(G)$ is a convex set of $G\overline{G}$. Since $V(G) \subset V(G\overline{G})$, it follows that $con(G\overline{G}) \geq |V(G)| = n$.

Now, let $diam(G) \geq 4$. According to  \cite{goddard2011distance}, $diam(G) > 3$ implies that $diam(\overline{G}) \leq 2$. Since $G\overline{G}$ is isomorphic to $\overline{G}G$, the result follows from the above case. 
\end{proof}

\section{Conclusions} 
\label{sec:conclusions}

We have considered the convexity number in the geodetic convexity for complementary prisms $G\overline{G}$. When $G$ is disconnected or $G$ is a cograph we provided an equality. When $diam(G) \neq 3$ we have presented a lower bound. From the complexity point of view, we have proved that, given a complementary prism $H\overline{H}$, and an integer $k$, it is $\NP$-complete to decide whether $con(H\overline{H}) \geq k$. 

\acknowledgements
\label{sec:ack}
The authors would like to thank CAPES and FAPEG for the partial support.

\bibliographystyle{abbrvnat}
\bibliography{bibliografia_convexidade}
\label{sec:biblio}

\end{document}